\setlist[itemize]{leftmargin=*, topsep=5pt}
\setlist{nolistsep}
\setlist[enumerate]{leftmargin=*, topsep=5pt}
\newtheorem{theorem}{Theorem}
\newtheorem{corollary}[theorem]{Corollary}
\newtheorem{definition}[theorem]{Definition}
{\bfseries}{\itshape}
\newtheorem{llemma}[theorem]{Lemma}{\bfseries}{\itshape}
\DeclarePairedDelimiter{\size}{\lvert}{\rvert}
\DeclarePairedDelimiter{\p}{\lparen}{\rparen}
\DeclarePairedDelimiter{\cb}{\{}{\}}
\DeclarePairedDelimiter{\br}{[}{]}
\newcommand{\eq}[2][]{\begin{equation}\label{#1}\begin{split}#2\end{split}\end{equation}}
\newcommand{\eqw}[2][]{\begin{equation*}\label{#1}\begin{split}#2\end{split}\end{equation*}}
\let\phi\varphi
\let\emptyset\varnothing
\newcommand{\pclique}{\textsc{\mdseries $p$-Clique}}
\newcommand{\pdoms}{\textsc{\mdseries $p$-Dominating-Set}}
\newcommand{\csp}{\textsc{\mdseries CSP}}
\newcommand{\pwcspf}[1][]{\textsc{\mdseries $p$-W\csp$\p{#1}$}}
\newcommand{\pwcsp}{\textsc{\mdseries $p$-W\csp}}
\newcommand{\pscsp}{\textsc{\mdseries $p$-Size-\csp}}
\newcommand{\phhs}{\textsc{\mdseries $p$-W-Hypergraphs-Hitting-Set}}
\newcommand{\pss}{\textsc{\mdseries $p$-Subset-Sum}}
\newcommand{\subsetsum}{\textsc{\mdseries Subset-Sum}}
\newcommand{\pssf}[1][]{\textsc{\mdseries $p$-Subset-Sum$\p{#1}$}}
\newcommand{\cnf}{\textsc{\mdseries CNF}}
\newcommand{\perfectcode}{\textsc{\mdseries $p$-Perfect-Code}}
\newcommand{\pewsat}[1][]{\textsc{\mdseries $p$-Exact-WSat\ensuremath{\p{#1}}}}
\newcommand{\satof}[1][]{\textsc{\mdseries Sat\ensuremath{\p{#1}}}}
\newcommand{\N}{\mathbb{N}}
\newcommand{\defeq}{\colonequals}
\NewDocumentCommand{\W}{o}{\textsc{\mdseries W}\IfValueT{#1}{\ensuremath{\br{#1}}}}
\title{Uniform \csp\ Parameterized by \\ Solution Size is in \W[1] }
\author{Ruhollah Majdoddin\thanks{Institut für Informatik, Humboldt Universität zu Berlin}
\href{https://orcid.org/0000-0003-0711-6287}{\protect\includegraphics[width=4mm, height=3mm]{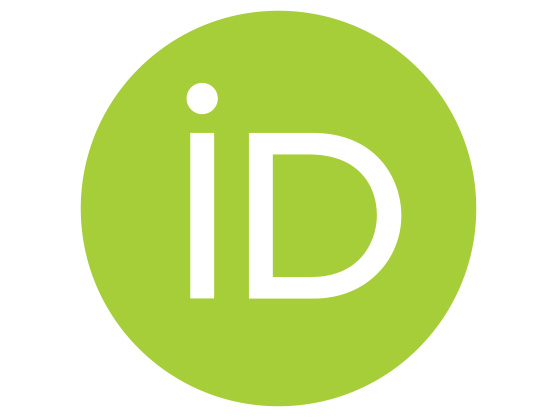}}
\\r.majdodin@gmail.com}
\date{}
\begin{document}

\title{Uniform \csp\ Parameterized by \\ Solution Size is in \W[1] }

%

%


%
\maketitle

\begin{abstract}
		We show that the uniform Constraint Satisfaction Problem (\csp ) parameterized by the size of the solution is in $\W[1]$ 
	(the problem is \W[1]-hard and it is easy to place it in $\W[3]$).
        Given a single ``free'' element of the domain, denoted by $\bm 0$,   
we define the \textit{size} of an assignment as the number of variables that are mapped to a value other than $\bm 0$.
        Named by Kolaitis and Vardi (2000), \textit{uniform} \csp\
        means that the input contains the domain and the list of tuples of each relation in the instance.
	Uniform \csp\ is polynomial time equivalent to homomorphism problem and also to evaluation of conjunctive queries on relational databases.
        It also has applications in artificial intelligence. 	

	We do not restrict the problem to any (finite or infinite) family of relations.
Marx and  Bulatov (2014) showed that Uniform \csp\ restricted to some finite family of relations (thus with a bound on  the arity of relations) 
 and over any finite domain is either \W[1]-complete or fixed parameter tractable.
 
 We then prove that parameterized \subsetsum\ with weights bounded by $n^k$ is in \W[1]. Abboud et al. (2014) have already proved it, but our proof is
much shorter and arguably more intuitive.

Lastly, we study the weighted \csp\ over the Boolean domain, where each variable is assigned a weight, and given a target value, 
it should be decided if there is a satisfying assignment of size $k$ (the parameter) such that the weight of its 1-variables adds up to the target value. 
We prove that if the weights are bounded by $n^{k}$, then the problem is in \W[1]. 

Our proofs give a nondeterministic RAM program with special properties deciding the problem.
        First defined by Chen et al. (2005), such programs characterize \W[1].

\end{abstract}

\section{Introduction}
The Constraint Satisfaction Problem (\csp ) is a fundamentally important problem in computer science, that can express a large number of problems in artificial intelligence and operational research \cite{RBW06}.
An instance $I$ of  \csp\ is  specified by a finite domain $D$, a set of relations over domain $D$, a set $V$ of variables, and a set of constraints $C$ of
 the form $R\p{x_1, \ldots, x_r}$, where $R$ is one of the relations with arity $r \ge 1$ and $x_1, \ldots, x_r \in V$.
 An \textit{assignment} to a set of variables $S\subseteq V$ is a mapping from $S$ to $D$. 
 An assignment to the set of variables of a constraint \textit{satisfies} the constraint
 if evaluating the tuple of variables of the constraint according to the assignment, gives a tuple in the corresponding relation.
 An assignment to $V$ is a \textit{satisfying assignment} of $I$, if it satisfies all the constraints in $I$.
 When seen as a decision problem, the question in \csp\ is whether the given instance has a satisfying assignment.
 
Kolaitis and Vardi \cite{KV00} made a distinction between \textit{nonuniform} \csp ,  
where the domain and the family of relations are fixed, and \textit{uniform} \csp , 
where the input contains the domain and the list of tuples of each relation in the instance. 
They showed that uniform \csp\ is polynomial time equivalent to evaluation of conjunctive queries on relational databases.
Feder and Vardi \cite{FV98} observed that uniform \csp\ (which had already applications in artificial intelligence) and the \textit{homomorphism problem} are polynomial time equivalent.
 
We study uniform \csp\ in the settings of parameterized  complexity. 
 Given a single ``free'' element of the domain, denoted by $\bm 0$,   
 we define the \textit{size} of an assignment as the number of variables that are mapped to a value other than $\bm 0$.
The \textit{Parameterized Size \csp} is defined as follows:
\begin{center}
\begin{tabular}{|l|}
\hline
\begin{tabular}{l}
        \pscsp \rule{0pt}{15pt} \\
\end{tabular}\\
\begin{tabular}{r l}
\textit{Instance:} &\makecell[tl]{
A domain $D$ including $\bm 0$, a set of variables $V$, a set of \ \  \\ constraints, list of tuples of each relation, and $k \ge 0$.
}  \\
\textit{Parameter:} &$k$.\\
\textit{Problem:} \rule[-15pt]{0pt}{15pt}&\makecell[tl]{Decide whether there is a satisfying assignment of size      $k$.
\rule[-10pt]{0pt}{10pt}}\\
 \end{tabular}\\
\hline
\end{tabular}
\end{center}
Many parameterized problems ask, given a structure $\cal A$ (on universe $A$), 
if there is a set $S \subset A$  of a given cardinality (the parameter) such that 
the substructure induced by $S$ has a special property.
Many of these problems can be readily reduced to \pscsp , such that
the size parameter in the resulting \pscsp\ instance has the same value as the cardinality of the set looked for.
A good example is \pclique , which, given an instance $\p{G, k}$, asks if Graph $G$ has a clique of size $k$. 
Another example is $p$-Vertex-Cover. These problems,  however, can be expressed with \pscsp\ restricted to a finite family of relations.
To capture the full expressiveness of \pscsp\ on Boolean domain, we introduce the following problem which is fixed parameter equivalent to \pscsp:
\begin{center}
\begin{tabular}{ |  l |}
        \hline
\begin{tabular}{l}
        \phhs \rule{0pt}{15pt} \\
\end{tabular}\\
\begin{tabular}{r l}
\textit{Instance:} &\makecell[tl]{
	A set $W$, hypergraphs $\p{V_1, E_1}, \ldots, \p{V_m, E_m}$
where $V_i \subseteq W$,\\ and $k \ge 0$.
}  \\
\textit{Parameter:} &$k$.\\
\textit{Problem:} \rule[-15pt]{0pt}{15pt}&\makecell[tl]{Decide whether there is a set $S \subset W$ of cardinality $k$ such that\\
	$S\cap V_i \in E_i$ for  $1 \le i \le m$.
\rule[-10pt]{0pt}{10pt}}\\
\end{tabular}\\
\hline 
\end{tabular}
\end{center}
 
Our main contribution is the following containment theorem:
\begin{theorem}\label{theorem-main}
	\pscsp\  $ \in \W[1]$.
\end{theorem}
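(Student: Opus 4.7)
The plan is to apply the machine-based characterization of $\W[1]$ by Chen, Flum and Grohe that the paper alludes to in the abstract: a parameterized problem lies in $\W[1]$ iff some nondeterministic RAM program decides it in fpt total time, with all nondeterministic guess-steps concentrated at the end (tail-nondeterminism) and with only $h(k)$ further steps after the last guess. I will exhibit such a program for $\pscsp$.

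The natural first move is to pass to $\phhs$, which the excerpt states is fpt-equivalent to $\pscsp$: for each pair $(v,d)$ of a variable $v$ and a nonzero value $d\in D\setminus\{\bm 0\}$ introduce an element $v_d$ of $W$, and turn each constraint $R(x_1,\dots,x_r)$ into a hyperedge $(V_C,E_C)$ whose allowed patterns $E_C$ are exactly the ``nonzero supports, with values'' of the tuples of $R$. The parameter is preserved. The NRAM program for $\phhs$ then has three phases: a polynomial deterministic preprocessing that, for each $v\in W$, builds the list of hyperedges containing $v$, records $I_0=\{i:\emptyset\in E_i\}$, and hashes every pattern $J\subseteq V_i$ of size at most $k$ that lies in $E_i$; a tail phase that guesses the $k$ elements $v_1,\dots,v_k$ composing $S$; and a deterministic verification that reconstructs $S\cap V_i$ for the hyperedges ``touched'' by some $v_j$ from the incidence lists and performs the hash lookups.

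The core obstacle is squeezing the verification into only $h(k)$ steps after the last guess: a straight pass over the $m$ hyperedges, or even over the incidence lists of the $k$ guessed elements (whose total size can be polynomial), breaks this budget. I expect the remedy to be folding all $m$ constraints into a single arithmetic identity of the form $\sum_{v\in S} w(v) = T$ on numbers of magnitude $n^{O(k)}$, checkable in $O(k)$ operations, in the spirit of the paper's subsequent $\pss$ result. This requires attaching to each element $v\in W$ a carefully chosen weight $w(v)$ and precomputing a target $T$ so that the sum equals $T$ iff $S$ hits every hyperedge in an allowed pattern. A positional/modular encoding with one ``digit'' per hyperedge, further subdivided by satisfaction pattern and scaled with primes (or random hash coefficients) to rule out accidental equalities, is the natural first attempt; the subtle point is arranging, for each hyperedge $i$, that all admissible patterns in $E_i$ collapse to the same partial sum while non-admissible ones cannot, despite the huge freedom in the choice of tuples per relation. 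Carrying out this encoding in polynomial time and without false positives is where the bulk of the work is expected to go.
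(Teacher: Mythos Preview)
Your plan correctly isolates the sole obstacle---checking all $m$ constraints in $h(k)$ post-guess steps---but the remedy you sketch cannot close it. A weight function $w:W\to\mathbb{Z}$ together with a target $T$ tests the \emph{linear} condition $\sum_{v\in S}w(v)=T$; restricted to one hyperedge $i$ this forces every admissible pattern $P\in E_i$ to have the same value $\sum_{v\in P}w(v)$. Already a single constraint defeats this: take $V_1=\{a,b,c\}$ and $E_1=\{\emptyset,\{a,b\},\{a,c\},\{b,c\}\}$. Equal sums force $w(a)=w(b)=w(c)=0$, whence every subset of $V_1$ has sum $0$ and the odd-size patterns become false positives. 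Subdividing into one digit per pattern does not help either: per pattern you can test ``$S\cap V_i=P$'' linearly, but what you then need is a \emph{disjunction} over $P\in E_i$, and no linear identity over the indicator vector of $S$ expresses that. Random coefficients or primes only avert accidental coincidences between distinct admissible sums; they cannot manufacture the required nonlinear predicate, and in any case the NRAM characterization of $\W[1]$ is deterministic.

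The paper's proof abandons linearity in the elements of $S$ and instead precomputes quantities indexed by \emph{subsets} of the guessed assignment. For each partial assignment $T$ of size $\le k$ that is a minimal non-satisfying extension for some constraint (a ``cover'' in the poset $(P_S,\subseteq)$), it stores $d[T]=|I_T|$, the number of constraints for which $T$ is such a cover; and for each pair $T\subset W$ it stores $l[T,W]=\sum_{\mathcal C_S\in I_T} f_{T,S}(W)$, where $f_{T,S}$ is obtained by M\"obius inversion on the poset of satisfying extensions of $T$ inside $C_S$, normalized so that $\sum_{W\subseteq U}f_{T,S}(W)=1$ for every satisfying $U\supset T$. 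After guessing $B$, one checks for every $T\subseteq B$ that $\sum_{T\subset W\subseteq B} l[T,W]=d[T]$; by the M\"obius identity each inner contribution is $1$ exactly when $B$ satisfies $\mathcal C_S$, and a maximal-$T$ argument (Lemma~\ref{lemma-TB}) guarantees some check fails when $B$ is not satisfying. This needs $2^{O(k)}$ trie lookups, each in $O(k)$ time, so the tail is $h(k)$ as required. The missing idea in your plan is precisely this M\"obius-inversion step, which replaces a hopeless degree-$1$ test in the $k$ guessed elements by a degree-$k$ test whose $2^{O(k)}$ coefficients are precomputable in fpt time.
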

\begin{corollary}
$\phhs  \in \W[1]$.
\end{corollary}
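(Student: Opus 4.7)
The plan is to exhibit a polynomial-time many-one reduction from $\phhs$ to $\pscsp$ that preserves the parameter, and then invoke Theorem~\ref{theorem-main} together with the closure of $\W[1]$ under fpt-reductions.

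Given an instance $\p{W, \p{V_1, E_1}, \ldots, \p{V_m, E_m}, k}$ of $\phhs$, I would build a $\pscsp$ instance over the Boolean domain $D = \cb{\bm 0, \bm 1}$ with variable set $V \defeq W$. For each hypergraph $\p{V_i, E_i}$ I would introduce a single constraint $R_i\p{\mathbf{x}_i}$, where $\mathbf{x}_i$ is a fixed enumeration of the vertices of $V_i$ and $R_i \subseteq \cb{\bm 0, \bm 1}^{\size{V_i}}$ is the relation whose tuples are exactly the characteristic vectors of the edges in $E_i$. The parameter $k$ is kept unchanged.

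Correctness is immediate: an assignment $\alpha \colon V \to \cb{\bm 0, \bm 1}$ corresponds to the set $S \defeq \cb{w \in W : \alpha(w) = \bm 1}$, and $\alpha$ has size $k$ iff $\size{S} = k$; moreover $\alpha$ satisfies $R_i\p{\mathbf{x}_i}$ iff the characteristic vector of $S \cap V_i$ lies in $R_i$, i.e.\ iff $S \cap V_i \in E_i$. Writing down each $R_i$ costs $\size{E_i} \cdot \size{V_i}$ bits, which is bounded by the size of the $\phhs$ input because the edge sets $E_i$ are themselves presented explicitly as part of the instance. Thus the map is a polynomial-time, parameter-preserving reduction, and Theorem~\ref{theorem-main} finishes the job.

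The only conceptual point worth highlighting, and essentially the only thing one might stumble over, is that the tuple-by-tuple listing of $R_i$ is not a blow-up: it is precisely the \emph{uniform} convention of Kolaitis and Vardi that allows each relation to be spelled out tuple by tuple in the input to $\pscsp$, and the budget for this listing is already paid for by the explicit description of the hypergraph edges $E_i$ in the $\phhs$ input.
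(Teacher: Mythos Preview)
Your reduction is exactly the obvious one and is correct; the paper itself gives no explicit proof of the corollary, merely asserting in the introduction that \phhs\ is fixed-parameter equivalent to \pscsp\ and letting the corollary follow from Theorem~\ref{theorem-main}. One small imprecision: the total listing cost $\sum_i \size{E_i}\cdot\size{V_i}$ need not be literally \emph{bounded by} the input size (if edges are encoded sparsely rather than as characteristic vectors), but it is certainly bounded polynomially in it, which is all that is required.
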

We prove the theorem by giving a \textit{tail-nondeterministic $\kappa$-restricted NRAM program} (explained in the next section) deciding the problem.
The significance of our containment result is that it is for the general problem, without restricting it to any (finite or infinite) family of relations.

Our work builds upon the work of Cesati \cite{Ces02}, which, 
answering a longstanding open problem, proved that \perfectcode\ is in $\W[1]$. 
Downey and Fellows \cite{DF95b} had already shown that this problem is $\W[1]$-hard 
and had conjectured that it either represents a natural degree intermediate between $\W[1]$ and $W[2]$, or is complete for $W[2]$.
There is a natural reduction from \perfectcode\ to  \pewsat[\cnf ^+], 
and the proof of \cite{Ces02} can be readily adapted to decide the latter problem. 
This problem is to decide, given a \cnf\ without negation symbols and a natural number $k$,
whether there is an assignment of size $k$, such that exactly one variable in each clause is mapped to $1$.
This can be seen as \pscsp\ restricted to a specific (infinite) family of Boolean relations,
where a tuple is in a relation, if and only if the tuple has exactly one $1$ (this implies that  \pscsp\ is \W[1]-hard).
Notice that because we do not restrict the problem to any family of relations,
our result generalizes that of Cesati  in at least three ways: 
Size of the tuples in the relations are not bounded, 
the (Boolean) relations do not need to be \textit{symmetric} (symmetric means that a tuple being in the relation depends only on the number of $1$s in the tuple),
and an instance can have any finite domain.

In fact, \pewsat[\cnf ^+] is an example of an interesting special case of our containment result:  
\pscsp\ restricted to any (infinite) family of symmetric Boolean relations, 
provided that there is a bound on the size of the tuples of any relation in the family.           
Notice that the bound implies that the number of tuples of each relation is bounded by a polynomial in the arity of the relation. 
Thus, listing all the tuples in the input makes the size of input at most polynomially bigger, and uniform and nonuniform \csp\ in this case have the same complexity.

It is not hard to see that \pscsp\ over the Boolean domain is in \W[3], by reducing it to the \textit{parameterized weighted satisfiability problem} for a class of circuits with bounded depth and \textit{weft} $3$:
one for the conjunction of all constraints in the instance, one for the disjunction of all satisfying assignments of each constraint,
and one to specify each satisfying assignment of a constraint. So what is the significance of placing a problem from (at least) \W[3] down to \W[1]?
First,  although it is a fundamental conjecture that \W[1]-complete problems are not fixed-parameter tractable,
many of them can still be solved substantially faster than exhaustive search over all $\binom{n}{k}$ subsets.
For example, \cite{NP85} gives  an $O\p{n^{.793k}}$ time algorithm for  \pclique .
In contrast, the $\W[2]$-complete problem \pdoms,
was shown by \cite{PW10} not to have such algorithms, unless \satof[\cnf] has an $O\p{2^{\delta n}}$ time algorithm
for some $\delta < 1$, which is an important open problem.
Second, we can express the problems in  \W[1] by a logic that is (conjectured to be) a proper subclass of any logic that can express the problems in \W[3]  (see \cite{FG06}). This means that putting a problem in \W[1] decreases the
descriptive complexity of the problem.

It is easy to see that \pscsp\ restricted to some finite family of relations (implying the arity of relations is bounded) is in $\W[1]$. 
Notice that listing the tuples of all relations in the input adds just a constant to the size of input.
Thus, uniform and nonuniform \csp\ in this case have the same complexity.
These problems are studied by Marx \cite{Mar05},
where he provides a dichotomy: If the family of relations has a property that he calls \textit{weak-separability},
then the problem is fixed-parameter tractable (like $p$-Vertex-Cover), otherwise it is $\W[1]$-complete (like \pclique).
This result is extended by Bulatov and Marx \cite{BM14} to any finite domain.

There is a variant of \csp\ on the Boolean domain where the variables are weighted. 
That is, each instance comes with a weight function over the variables 
and a target value. It should be decided if there is an assignment that 
satisfies the constraints and the weights of its $1$-variables add up to the target value.
Special cases of this variant are studied in the literature. For example, \cite{DLMR12,TTV17,CS18} study the 
parameterized problem of finding a clique in a weighted graph,  where the parameter is the target value.

In studying this kind of problems, one is in some way dealing with \subsetsum. 
For every computable function $f:\N\rightarrow \N$ we let: 
\begin{center}
\begin{tabular}{ |  l |}
	\hline
	\begin{tabular}{l}
	        \pssf[f] \rule{0pt}{15pt} \\
	\end{tabular}\\
	\begin{tabular}{r l}
		\textit{Instance:} &\makecell[tl]{
					$k\ge 0$, $n$ integers $x_1, \ldots,x_n \in \br{0, n^{f\p{k}}}$,  $t \in \br{0, n^{f\p{k}}}$.
				   }  \\
		\textit{Parameter:} &$k$.\\
		\textit{Problem:} \rule[-15pt]{0pt}{15pt}&\makecell[tl]{Decide whether there exists a subset $B \subset \br{n}$ of size\\
 									$\size{B} = k$ such that $\sum_{i \in B} x_i = t$.}
		\rule[-20pt]{0pt}{20pt}\\
	\end{tabular}\\
\hline
\end{tabular}
\end{center}

\begin{theorem}\label{theorem-subset-sum}
	$\pssf[f]\in \W[1]$, for all computable functions $f$.
\end{theorem}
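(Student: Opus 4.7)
The plan is to exhibit a tail-nondeterministic $\kappa$-restricted NRAM program deciding $\pssf[f]$; by the Chen--Flum--Grohe characterization invoked in the introduction, this places the problem in $\W[1]$. The conceptual obstacle is that the weights $x_i$ and the target $t$ may be as large as $n^{f(k)}$, so the naive strategy of guessing $k$ indices and then adding their weights would manipulate integers of super-polynomial magnitude, in conflict with the register-size restriction imposed by the model.

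I would circumvent the large-integer arithmetic by a digit-wise sweep. In a deterministic preprocessing phase I write each $x_i$ and the target $t$ in base $n$, producing digit arrays $x_i^{(0)},\ldots,x_i^{(f(k)-1)}$ and $t^{(0)},\ldots,t^{(f(k)-1)}$ whose entries are nonnegative integers below $n$. This takes time polynomial in the input length. The tail-nondeterministic phase then guesses the $k$ candidate indices $i_1,\ldots,i_k\in\br{n}$, using $O(k\log n)$ nondeterministic bits, comfortably within the tail budget.

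In the deterministic verification that follows, I sweep through the digit positions $j=0,1,\ldots,f(k)-1$ in order, maintaining a carry $c_j$ with $c_0:=0$. At position $j$ I form $s_j:=c_j+\sum_{\ell=1}^{k}x_{i_\ell}^{(j)}$, check that $s_j \bmod n = t^{(j)}$, and set $c_{j+1}:=\lfloor s_j/n\rfloor$. A short induction gives $c_j\le k$, hence $s_j\le k n$, so every quantity handled in this phase fits in a polynomially-sized register. The program accepts iff $c_{f(k)}=0$. The post-guess running time is $O(k\, f(k))$ register operations, hence fpt.

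The main obstacle is to match faithfully the syntactic constraints of the tail-nondeterministic $\kappa$-restricted NRAM model: one must verify that every value manipulated outside the preprocessing phase is polynomially bounded in $n$, that the nondeterministic steps occupy a single tail block whose length depends only on $k$ and $\log n$, and that the schema above compiles to a uniform program of the required form. Once that bookkeeping is done, Theorem~\ref{theorem-subset-sum} follows. Compared with the color-coding argument of Abboud et al., this approach pinpoints the only genuine source of difficulty---large-magnitude addition---and defuses it by the classical trick of schoolbook addition with carries.
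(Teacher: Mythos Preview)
Your approach is essentially identical to the paper's: precompute the base-$n$ digits of every $x_i$ and of $t$, guess $k$ indices, then perform schoolbook addition with carry digit by digit, checking against the digits of $t$. The paper bounds the carry by $k+1$ and the tail by $O(kf(k))$, exactly as you do.

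One point to fix: your phrasing ``$O(k\log n)$ nondeterministic bits'' and ``tail block whose length depends only on $k$ and $\log n$'' does not match the model. In the NRAM of Chen--Flum--Grohe a single \textsc{Guess} instruction guesses an entire integer $\le$ the accumulator in one step, so guessing $k$ indices costs $k$ nondeterministic steps, and the whole tail (guessing, distinctness check, and the digit sweep) is $O(kf(k))$ steps. Crucially, tail-nondeterminism requires the tail length to be bounded by a function of the \emph{parameter alone}; a $\log n$ term there would violate the definition. Also, since the inputs lie in $[0,n^{f(k)}]$ (closed interval), you need digits $0,\ldots,f(k)$, not $0,\ldots,f(k)-1$, and you should explicitly check that the guessed indices are pairwise distinct.
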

Our proof gives a \textit{tail-nondeterministic $\kappa$-restricted NRAM program}
deciding the problem.  
Abboud, Lewi and Williams \cite{ALW14} have also proved this theorem.              
Given an instance of the problem, they generate $g\p{k}\cdot n^{o\p{1}}$ instances
of \pclique\ on $n$ node graphs,
such that one of these graphs contains a $k$-clique if and only if the \pss\ instance has a solution.
The proof follows because \pclique\ is $\W[1]$-complete.
Our proof is considerably shorter and arguably more intuitive. 
They also prove that the weighted variant of \pclique\  is in \W[1]. 
Notice that here the parameter is the size of the clique, thus it is a substantially weaker parameterization than that of  \cite{DLMR12,TTV17,CS18}. 
Generalizing this problem in the language of \csp ,  for every computable function $f:\N\rightarrow\N$, we introduce
the \textit{Parameterized Weighted \csp}:
\begin{center}
\begin{tabular}{ |  l |}
\hline
\begin{tabular}{l}
	      \pwcspf[f] \rule{0pt}{15pt}\\ 
\end{tabular}\\
\begin{tabular}{r l}
\textit{Instance:} &\makecell[tl]{
	A set of variables $V$, the domain $\cb{\bm 0, 1}$, 
a set of constraints,\\ list of tuples of each relation, $k \ge 0$,\\
$w:V\rightarrow \br{0, n^{f\p{k}}}$,  $t  \in \br{0, n^{f\p{k}}}$.
}  \\
\textit{Parameter:} &$k$.\\
\textit{Problem:} \rule[-15pt]{0pt}{15pt}&\makecell[tl]{Decide whether there is a satisfying assignment $B$ of size \\     $\size{B}=k$, 
	such that $\sum_{\p{v, 1} \in B} w\p{v} = t$.
\rule[-10pt]{0pt}{10pt}}\\
\end{tabular}\\ 
\hline
\end{tabular}
\end{center}
where $n$ is the size of the input.
\begin{theorem}
	$\pwcspf[f] \in \W[1]$, for all computable functions $f:\N \rightarrow \N$.
\end{theorem}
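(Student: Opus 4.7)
The plan is to construct a tail-nondeterministic $\kappa$-restricted NRAM program for $\pwcspf[f]$ by combining the two programs that witness Theorems~\ref{theorem-main} and~\ref{theorem-subset-sum}. The problem is the natural ``conjunction'' of $\pscsp$ on the Boolean domain with $\pssf[f]$ taken over the same $k$ chosen variables, so a single nondeterministic guess of those $k$ positions should suffice to drive both verifications.

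Concretely, the nondeterministic guess consists of the $k$ indices $i_1 < \dots < i_k$ of the variables set to $1$; this costs $k\lceil\log n\rceil$ bits and fits inside the $\kappa$-restriction budget. Given these indices, verification splits into two independent deterministic checks executed in the tail. First, the CSP-verification routine from the proof of Theorem~\ref{theorem-main} confirms that for every constraint $R\p{x_{j_1},\dots,x_{j_r}}$ the tuple obtained by substituting $1$ for the chosen variables and $\bm 0$ elsewhere appears in the tuple list of $R$. Second, the subset-sum routine from the proof of Theorem~\ref{theorem-subset-sum}, applied to the weights $w\p{v_{i_1}},\dots,w\p{v_{i_k}}$ and target $t$, checks that these weights sum to $t$. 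Any input-dependent preprocessing either routine needs (tuple indexing for the relations, auxiliary tables for the weight array) is performed deterministically before the tail in polynomial time.

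The main obstacle is that weights may reach $n^{f\p{k}}$ and so do not fit into the $O\p{\log n}$-bit registers available inside the tail; this is precisely the difficulty resolved by Theorem~\ref{theorem-subset-sum}, so the key technical point is to verify that the mechanism used there (computing partial sums and auxiliary difference tables in the deterministic prefix and then using the tail guesses only as indices into these precomputed structures) can be invoked as a black box and interleaves cleanly with the CSP-verification routine. Since the two subroutines share only the indices $i_1,\dots,i_k$ and otherwise consult disjoint portions of the input, their concatenation should preserve the tail-nondeterministic $\kappa$-restricted property with a new bound $\kappa'\p{k}$ obtained from the individual bounds, placing $\pwcspf[f]$ in $\W[1]$.
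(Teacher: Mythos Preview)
Your proposal is correct and follows essentially the same approach as the paper: run the deterministic preprocessing of both programs from Theorems~\ref{theorem-main} and~\ref{theorem-subset-sum}, make a single guess of the $k$ nonzero variables, and then execute both tail-verification routines on that shared guess. One cosmetic remark: your informal descriptions of the two black boxes are slightly off (the CSP routine does \emph{not} iterate over all constraints---that is the whole point of the Möbius-trie machinery---and the subset-sum routine uses base-$n$ digit tables rather than ``partial sums and difference tables''), but since you invoke both as black boxes this does not affect the argument.
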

The proof employs  our proofs of  Theorems \ref{theorem-main} and \ref{theorem-subset-sum}.

For the basic concepts, definitions and notation of  parameterized complexity theory, we refer the reader to \cite{FG06}.
\paragraph{Notation}
{For integers $n$, $m$ with $n \le m$, we let $\br{n, m} \defeq \cb{n, n+1, \ldots, m}$ and $\br{n}\defeq\br{1, n}$.}
\section{A Machine Characterization of \texorpdfstring{\W[1]}{W[1]}}
We use a nondeterministic random access machine model. It is based on a standard deterministic random access machine (RAM) model.
Registers store nonnegative integers. Register $0$ is the \textit{accumulator}.
The arithmetic operations are addition, subtraction (cut off at $0$), and division by two (rounded off), and we use a uniform cost measure.
For more details see \cite{FG06}.
We define a \textit{nondeterministic} RAM, or NRAM, to be a RAM with an additional instruction ``GUESS'' whose semantics is:
\vspace{3pt}

\leftskip=\parindent
\noindent
\textit{Guess a natural number less than or equal to the number stored in the accumulator and store it in the accumulator.}

\vspace{3pt}
\leftskip=0pt

 \noindent
 Acceptance of an input by an NRAM program is defined as usually for non-deterministic machines. Steps of a computation of an NRAM that execute a GUESS  instruction are called \textit{nondeterministic steps}.
\begin{definition}
        Let $\kappa : \Sigma^* \rightarrow \N$ be a parameterization. An NRAM program $P$ is \emph{$\kappa$-restricted} if there are computable functions $f$ and $g$ and a polynomial $p
\p{X}$ such that on every run with input $x \in \Sigma^*$ the program $P$
        \begin{itemize}
                \item performs at most $f\p{k}\cdot p\p{n}$ steps, at most $g\p{k}$ of them being nondeterministic;
                \item uses at most the first $f\p{k}\cdot p\p{n}$ registers;
                \item contains numbers $\le f\p{k}\cdot p\p{n}$ in any register at any time.
        \end{itemize}
        Here $ n \defeq \size{x}$, and $k \defeq \kappa\p{x}$.
\end{definition}
\begin{definition}
        A $\kappa$-restricted NRAM program $P$ is \textit{tail-nondeterministic} if there is a computable function $q$ such
        that for every run of $P$ on any input $x $ all nondeterministic steps are among the last $q\p{\kappa\p{x}}$ steps of the computation.
\end{definition}
The machine characterization of $\W[1]$ reads as follows:
\begin{theorem}[\cite{CFG05}]\label{nram}
	Let $\p{Q, \kappa}$ be a parameterized problem. Then $\p{Q, \kappa} \in \W[1]$ if and only if there is a tail-nondeterministic
        $\kappa$-restricted NRAM program deciding $\p{Q, \kappa}$.
\end{theorem}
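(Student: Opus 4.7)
The plan is to establish the equivalence in both directions, treating this as a machine characterization that pins down $\W[1]$ to a particular resource-bounded NRAM model.

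For the forward direction, that every $(Q, \kappa) \in \W[1]$ is decided by such a program, I would fix a known $\W[1]$-complete problem $B$ with a transparent machine model---the short acceptance problem for nondeterministic Turing machines is a natural choice---and proceed in two steps. First, exhibit a tail-nondeterministic $\kappa$-restricted NRAM program for $B$: spend a polynomial deterministic prefix copying the transition table and input into the registers, then in the last $O(f(k))$ steps perform the $f(k)$ GUESS operations needed to select the sequence of nondeterministic transitions and simulate them. Second, verify that the class of problems decidable by tail-nondeterministic $\kappa$-restricted NRAM programs is closed under fpt-reductions. This is routine: an fpt-reduction runs in time $f(k) p(n)$ and is deterministic, so it prepends to the deterministic prefix while leaving the tail structure of the program for $B$ intact.

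For the reverse direction, starting from a tail-nondeterministic $\kappa$-restricted NRAM program $P$ for $(Q, \kappa)$, I would fpt-reduce its acceptance problem to a canonical $\W[1]$-complete weighted satisfiability problem. The plan is to simulate the deterministic prefix of $P$ to reach the configuration $C_0$ at the start of the tail; this costs $f(k) p(n)$ time, which is available to an fpt-reduction. The question becomes: is there a vector of at most $g(k)$ natural numbers, each in $\br{0, f(k) p(n)}$, such that continuing $P$ from $C_0$ using these as the successive GUESS outcomes leads to acceptance? I would encode each guess by a block of $f(k) p(n)$ selector variables constrained to have Hamming weight exactly $1$, so an accepting guess vector corresponds to a satisfying assignment of total weight $g(k)$ to a suitably designed weft-$1$ formula.

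The main obstacle is the reverse direction, specifically controlling the weft of the verifying circuit while the tail of $P$ makes up to $q(k)$ register operations that may depend on each other in intricate ways. I would handle this by precomputing, inside the fpt-reduction, a table that maps each partial guess vector to the resulting intermediate configuration along the tail; since there are at most $(f(k) p(n))^{g(k)}$ such vectors and each is polynomial in size, the table is polynomial-sized up to a function of $k$. The verifying circuit then needs only a single layer of large disjunctive gates (the guess selectors) together with gates of size bounded by a function of $k$ that check compatibility with the precomputed transition data, giving weft $1$ and weight parameter $g(k)$, and placing $(Q, \kappa)$ in $\W[1]$.
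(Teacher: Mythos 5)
First, a point of comparison: the paper does not prove this statement at all --- it is imported verbatim from Chen, Flum and Grohe \cite{CFG05} and used as a black box. So there is no internal proof to measure you against; I can only assess your proposal against the known argument. Your forward direction is essentially that argument and is fine: give a tail-nondeterministic $\kappa$-restricted program for one $\W[1]$-complete problem with a transparent machine flavour, and check that the machine class is closed under fpt-reductions (composing the reduction's $f\p{k}p\p{n}$-time deterministic computation with the target program's bounds keeps all three resource constraints fpt and leaves the nondeterminism in the tail).

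The reverse direction, however, contains a genuine gap at exactly the point you identify as the main obstacle. You propose to precompute a table indexed by all partial guess vectors and claim that, since there are at most $\p{f\p{k}p\p{n}}^{g\p{k}}$ of them, ``the table is polynomial-sized up to a function of $k$''. It is not: $\p{f\p{k}p\p{n}}^{g\p{k}} \ge p\p{n}^{g\p{k}} = n^{\Omega\p{g\p{k}}}$, so the exponent of $n$ grows with the parameter, and no fpt-reduction can write this table down. Enumerating all guess sequences is precisely the brute-force search that membership in $\W[1]$ is supposed to circumvent, so this step cannot be repaired by bookkeeping. The argument of \cite{CFG05} avoids any object indexed by whole guess tuples: since the tail has only $q\p{\kappa\p{x}}$ steps, it touches only $O\p{q\p{k}}$ registers beyond the configuration $C_0$, and each single tail step relates a \emph{bounded} number of quantities (accumulator, one addressed register, one guessed value) via bounded-arity relations over the universe $\br{0, f\p{k}p\p{n}}$ that are computable from $C_0$ in fpt time. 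One therefore introduces one existentially quantified variable per tail step and expresses acceptance as an existential first-order sentence with $O\p{q\p{k}}$ variables and bounded-arity atoms over an fpt-computable structure; model checking such sentences (equivalently, weighted satisfiability of the resulting weft-$1$ circuits) is $\W[1]$-complete. Your selector-variable encoding of individual guesses is compatible with that scheme, but the compatibility checks must be local in a bounded number of guessed and intermediate values, never global lookups on the full guess vector.
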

\section{Partially Ordered Sets}
The \textit{Möbius function} of a poset $\p{P, \le}$ is a function $\mu:P\times P\rightarrow \mathbb{Z}$  defined recursively as follows. 
\begin{align*}
 &\mu\p{x, y} =
  \begin{cases}
 1  &$x=y$,\\
 -\sum_{x \le z  < y} \mu\p{x,z} & x < y,\\
 0 & x > y.
 \end{cases}
\end{align*}
\begin{theorem}[Möbius inversion formula]
\label{theorem-mobius-inversion-formula}
   Let $\p{P, \le}$ be a finite partially ordered set with a minimum element.
   For functions $f, g:P \rightarrow \mathbb{Z}$, suppose that
   \eqw[mobius-inversion-formula-g]{
   g\p{x}=\sum_{y \le x} f\p{y}.
   }
   Then
   \eq[mobius-inversion-formula-f]{
   f\p{x} = \sum_{y \le x} g\p{y}\mu\p{y, x}.
   }
   Furthermore, there is a computable function $h:\mathbb{N}\rightarrow \mathbb{N}$ such that  
   \eq{
    \size{f\p{x}} \le h(t)\,\max_x \size{g\p{x}} \quad\text{ for all $x$},
   }
    where $t$ is the maximum size of any interval in $P$.
\end{theorem}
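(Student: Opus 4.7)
The plan is to derive the inversion identity via the incidence algebra of $P$, then bound $|\mu|$ by induction on interval size. For the identity, observe that the recursion is equivalent to $\sum_{x \le z \le y} \mu(x, z) = [x = y]$, i.e., in matrix form (with rows and columns ordered by a linear extension of $\le$) the product $\mu \cdot \zeta$ equals the identity, where $\zeta(x, y) = [x \le y]$ is the zeta matrix. Since $\zeta$ is unit-diagonal triangular, its left and right inverses coincide, yielding the dual identity $\sum_{x \le z \le y} \mu(z, y) = [x = y]$ as well. The hypothesis $g(x) = \sum_{y \le x} f(y)$ reads $g = \zeta^\top f$; multiplying by $(\zeta^\top)^{-1} = \mu^\top$ then gives exactly $f(x) = \sum_{y \le x} g(y)\,\mu(y, x)$.

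For the bound, I induct on the cardinality of $[y, x]$ to show $|\mu(y, x)| \le C(|[y, x]|)$ for some computable $C$. Starting from $|\mu(y, y)| = 1$, the recursion gives $|\mu(y, x)| \le \sum_{y \le z < x} |\mu(y, z)|$ for $y < x$, and each subinterval $[y, z]$ has fewer than $|[y, x]|$ elements, so $C(s) \le (s - 1)\,C(s - 1)$ and hence $C(s) \le (s - 1)!$. Because $\{y : y \le x\}$ is itself an interval of $P$ (from the minimum to $x$), it has at most $t$ elements, and therefore
$$|f(x)| \;\le\; \sum_{y \le x} |g(y)|\,|\mu(y, x)| \;\le\; t \cdot (t - 1)! \cdot \max_{z \in P} |g(z)|,$$
so $h(t) := t!$ works.

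The main obstacle is the first step: the given recursion directly yields only the identity $\mu \cdot \zeta = I$, whereas substituting $g = \zeta^\top f$ into the candidate formula and swapping the order of summation produces the dual requirement $\zeta \cdot \mu = I$, that is, $\sum_{z \le y \le x} \mu(y, x) = [z = x]$. Bridging the two is a one-line observation (uniqueness of the inverse of a unit-diagonal triangular matrix, or an equivalent induction on interval length), but it is the one genuinely algebraic step; everything else is routine manipulation.
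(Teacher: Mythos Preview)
Your proof is correct and follows the same approach as the paper: the paper cites \cite{Rot64} for the inversion identity and then bounds $|\mu|$ by a computable function of the maximal interval size to obtain $h(t)=t\cdot h'(t)$, which is exactly what your induction yields with $h'(t)=(t-1)!$ and hence $h(t)=t!$. Your write-up is simply more explicit, supplying the standard incidence-algebra argument (left inverse equals right inverse for a unit-triangular $\zeta$) in place of the citation and carrying out the induction the paper leaves as ``easy to see.''
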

\begin{proof}
See {\cite{Rot64}} for a proof of the first claim.
For the bound, it is easy to see that there is a computable function $h':\mathbb{N}\rightarrow \mathbb{N}$ such that  
   \eq{
   \max_{x,y} \size{\mu\p{x,y}} \le h'(t).
   }
Thus, $f\p{x} \le t\; h'\p{t} \; \max_x \size{g\p{x}}$ for all $x$. Set $h\p{t}\defeq t \; h'\p{t}$ and the claim follows.
\end{proof}
\noindent
Let $\p{P, \le}$ be a poset, and $x,y \in P$. We say $y$ covers $x$ if $x < y$ 
and there is no element $z\in P$ such that $x<z<y$. 
Let $Q \subseteq P$. We denote by $\tilde Q$ (with respect to $P$) the set
of all $y\in P$, such that $y\not \in Q$ 
and $y$ covers some $x \in Q$. 

\begin{llemma}\label{lemma-cover}
Let $\p{P, \le}$ be a poset and suppose $Q \subseteq P$  and $y \in P$.  Then
$y \not \in Q$ if and only if all maximal elements of 
$\cb{w \in Q \cup \tilde Q | w\le y}$ are in $\tilde Q$.
\end{llemma}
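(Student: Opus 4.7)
The plan is to verify each direction separately; the backward implication is immediate, while the forward one rests on the existence of an element of $P$ that covers $w$ from above and still lies below $y$. Throughout I write $S \defeq \cb{w \in Q \cup \tilde Q \mid w \le y}$.

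For the backward direction I will argue by contraposition. Assume $y \in Q$. Then $y$ itself belongs to $S$ and is its maximum, hence a maximal element. But by the very definition of $\tilde Q$, the sets $Q$ and $\tilde Q$ are disjoint, so $y \notin \tilde Q$, contradicting the hypothesis that every maximal element of $S$ lies in $\tilde Q$.

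For the forward direction, suppose $y \notin Q$ and let $w$ be an arbitrary maximal element of $S$. Arguing by contradiction, I will assume $w \in Q$. Since $y \notin Q$ we have $w \ne y$, and therefore $w < y$. The key step is to produce an element $z$ that covers $w$ and satisfies $z \le y$: take any minimal element of the non-empty set $\cb{z' \in P \mid w < z' \le y}$; by minimality no $z'$ can sit strictly between $w$ and $z$, so $z$ covers $w$. Since $z$ covers $w \in Q$, the definition of $\tilde Q$ forces $z \in Q \cup \tilde Q$, hence $z \in S$, contradicting the maximality of $w$.

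The only mild obstacle is the cover-existence step, which implicitly requires that the interval $[w,y]$ be finite. This is harmless in our setting, since the whole Möbius machinery here is applied only to finite posets.
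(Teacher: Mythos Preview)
Your proof is correct and follows essentially the same approach as the paper's: the backward direction uses that $y$ is the maximum of $S$ when $y\in Q$, and the forward direction produces a cover $z$ of a putative maximal $w\in Q$ below $y$ to contradict maximality. Your version is in fact slightly tidier, since you avoid the paper's preliminary case split on whether $y\in Q\cup\tilde Q$, and you make explicit the finiteness needed for the cover-existence step that the paper leaves implicit.
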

\begin{proof}
Suppose $y \not \in Q$. If  $y \in Q \cup \tilde Q$, then clearly $y\in \tilde{Q}$ and the claim follows.
Otherwise, let $x < y$ be a maximal element of $\cb{w \in Q \cup \tilde Q | w\le y}$. 
There is at least one $z \in P$ such that $z \le y$ and $z$ covers $x$.
Now, if $x\in  Q$, then either $z \in Q$ or
 $z \in \tilde{Q}$, 
 thus $z\in \cb{w \in Q \cup \tilde Q| w\le y}$,
 contradicting the maximality of $x$. 
Therefore, $x \in \tilde{Q}$.

The other direction is implied by the trivial fact that if $y \in Q$, then 
$y$ is the maximum element of
$\cb{w \in Q \cup \tilde Q| w\le y}$.
\end{proof}
\section{\texorpdfstring{\pscsp}{p-SIZE-CSP}  is in \texorpdfstring{\W[1]}{W[1]}} 
In this section we present a tail-nondeterministic
$\kappa$-restricted NRAM program $Q$ that decides \pscsp .
Given an instance $I_0$ of the problem with the set of variables $V$ and parameter value $k$,
our program first constructs a second instance $I$ with the same set of variables and the same parameter value,
such that $I_0$ and $I$ have the same set of satisfying assignments of size $\le k$, and 
$I$ has the following properties. Each variable appears in each constraint at most once, and for each subset $S \subseteq V$, there is at most one constraint with this set of variables, 
thus each constraint is characterized by its set of variables.
This construction, invented by Papadimitriou and Yannakakis \cite{PY99}, is as follows. 

Henceforth, we characterize an assignment $A$ with the set of 
all $\p{v, d}$, such that $v$ is mapped to $d$ and $d \ne \bm 0$. 

Fix an order on $V$. For each subset $S \subseteq V$, if $I_0$ has a constraint such that the set of variables of the constraint is exactly $S$ (possibly with repetitions),
then $I$ has Relation  $R_S$ of arity $\size{S}$ and Constraint ${\cal C}_S$ defined as follows.
An assignment $A$ of $S$  satisfies ${\cal C}_S$ if and only if
$\size{A} \le k$ and $A$ satisfies every constraint $\cal C$ in $I_0$,
such that the set of variables of $\cal C$  is exactly $S$ (possibly with repetitions).
The order of variables in ${\cal C}_S$ is determined by the order on $V$. Relation $R_S$ is defined accordingly.
Notice that there is a natural bijective mapping of the tuples in $R_S$ to the satisfying assignments of ${\cal C}_S$.

Program $Q$, in its nondeterministic part, guesses an assignment and 
checks if it satisfies all the constraints of $I$. But because $Q$ is tail-nondeterministic $\kappa$-restricted,
we need some method, explained below, instead of trivially going over all constraints.

For a constraint ${\cal C}_S$ in $I$, let $C_S$ be the set of all satisfying assignments of ${\cal C}_S$, 
and $P_S$ be the set of all possible assignments of $S$.
Clearly, $C_S \subseteq P_S$. Thus, we define $\tilde C_S$ with respect to Poset $\p{P_S, \subseteq}$.

Define 
\eqw{
\tilde {C}\defeq \bigcup_{{\cal C}_S \in I} \tilde {C_S},
} and for each $T \in \tilde {C}$ define 
\eqw{
I_T \defeq \cb{{\cal C}_S \in I | T \in \tilde {C_S}}.
}
For each $T$ in $\tilde C$ and each $C_S \in I_T$, let $H_{T,S}$ be the set of proper supersets of $T$ in $C_S$:
\eqw{
H_{T, S}:=\cb{U \in C_S| T \subset  U }.
} 
\begin{llemma}\label{lemma-TB}
Let $B$ be an assignment. 
If $B$ does not satisfy $I$, then 
there is a $T_B \in \tilde C$ such that if $B$ does not satisfy some ${\cal C}_S \in I_{T_B}$ (and
there is at least one such constraint), then $H_{T_B, S}=\emptyset$. 
\end{llemma}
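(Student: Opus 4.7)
The plan is to pick $T_B$ as a maximal element (with respect to $\subseteq$) of
\[
\mathcal{T} \defeq \cb{T \in \tilde C \mid T \subseteq B \text{ and some } {\cal C}_S \in I_T \text{ is violated by } B},
\]
and then to derive the two claimed properties from two successive applications of Lemma~\ref{lemma-cover}.

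To see $\mathcal{T} \neq \emptyset$, pick any constraint ${\cal C}_S$ that $B$ violates, so $B|_S \in P_S \setminus C_S$. Lemma~\ref{lemma-cover} applied to $(P_S, \subseteq)$ with $Q = C_S$ and $y = B|_S$ supplies a maximal element of $\cb{w \in C_S \cup \tilde C_S \mid w \subseteq B|_S}$ that lies in $\tilde C_S$; this element sits in $\tilde C$, is contained in $B$, and witnesses the violated ${\cal C}_S \in I_T$, so it belongs to $\mathcal{T}$. The ``there is at least one such constraint'' clause of the conclusion is then immediate from $T_B \in \mathcal{T}$.

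For the main assertion, suppose for contradiction that some violated ${\cal C}_{S'} \in I_{T_B}$ admits $U \in C_{S'}$ with $T_B \subsetneq U$. Because $T_B \in \tilde C_{S'} \subseteq P_{S'}$, every non-zero pair of $T_B$ has its variable in $S'$, hence $T_B \subseteq B|_{S'}$. A second application of Lemma~\ref{lemma-cover}, now to $(P_{S'}, \subseteq)$ with $y = B|_{S'} \notin C_{S'}$, says that every maximal element of $\cb{w \in C_{S'} \cup \tilde C_{S'} \mid w \subseteq B|_{S'}}$ lies in $\tilde C_{S'}$. If $U \subseteq B|_{S'}$, then $U$ belongs to that set, and a maximal $T^{*} \supseteq U$ of the set sits in $\tilde C_{S'} \subseteq \tilde C$ with $T^{*} \subseteq B$ and ${\cal C}_{S'}$ violated and in $I_{T^{*}}$; so $T^{*} \in \mathcal{T}$ with $T^{*} \supsetneq T_B$, contradicting the maximality of $T_B$.

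The main obstacle is the residual case $U \not\subseteq B|_{S'}$. Here I would pick $(v, d) \in U \setminus B|_{S'}$ and examine the cover $T_B \cup \cb{(v, d)}$ inside $P_{S'}$: since $T_B \in \tilde C_{S'}$ already covers some $x \in C_{S'}$, the companion element $x \cup \cb{(v, d)}$ is a natural candidate member of $C_{S'}$, and when it does lie in $C_{S'}$ the cover $T_B \cup \cb{(v, d)}$ itself lands in $\tilde C_{S'}$. Iterating this construction along the chain $T_B \subsetneq T_B \cup \cb{(v, d)} \subseteq U$ and intersecting with $B|_{S'}$ where needed, I expect either to produce a strictly larger element of $\mathcal{T}$ (again contradicting the maximality of $T_B$) or to reduce to the already-handled case $U \subseteq B|_{S'}$. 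Executing this chain analysis cleanly, so that every intermediate cover can be matched with a true member of $C_{S'}$, is the step I anticipate requiring the most care.
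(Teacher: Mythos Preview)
Your overall plan---choose $T_B$ maximal among the $T\in\tilde C$ with $T\subseteq B$ witnessing a violated constraint, then invoke Lemma~\ref{lemma-cover}---is exactly the paper's approach (the paper takes $T_B$ of maximal \emph{size} rather than $\subseteq$-maximal, but this makes no difference). Your non-emptiness argument and your Case~1 ($U\subseteq B|_{S'}$) are correct and match the paper.

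The difficulty you flag in the residual case $U\not\subseteq B|_{S'}$ is real, but it is not a gap in your reasoning: the lemma \emph{as stated} is false, so no amount of chain analysis will close that case. A two-variable Boolean instance already breaks it. Take $S=V=\{v_1,v_2\}$, $k=2$, and the single constraint ``$v_1=v_2$'', so that
\[
C_S=\bigl\{\emptyset,\ \{(v_1,1),(v_2,1)\}\bigr\},\qquad
\tilde C=\tilde C_S=\bigl\{\{(v_1,1)\},\ \{(v_2,1)\}\bigr\}.
\]
For $B=\{(v_1,1)\}$ the constraint is violated, yet for \emph{every} $T\in\tilde C$ we have $H_{T,S}=\bigl\{\{(v_1,1),(v_2,1)\}\bigr\}\neq\emptyset$.

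What the paper's own proof actually establishes (and what the next lemma actually uses, since the inner sum there ranges only over $W\subseteq B$) is the weaker conclusion
\[
\{\,U\in H_{T_B,S}\mid U\subseteq B\,\}=\emptyset,
\]
i.e.\ there is no $U\in C_S$ with $T_B\subsetneq U\subseteq B$. That is precisely what your Case~1 argument proves: if such a $U$ existed it would lie in $\{w\in C_S\cup\tilde C_S\mid w\subseteq B|_{S'}\}$, and any maximal element above it would be a member of $\tilde C_{S'}\cap\mathcal T$ strictly containing $T_B$. So discard the chain construction; your proof is already complete once the conclusion is corrected to the intended one.
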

\begin{proof}
If $B$ does not satisfy $I$, then set ${T_B}$ to an element of maximal size in $\tilde C$, such that at least one constraint ${\cal C}_S \in I_{T_B}$ is not satisfied by $B$. By applying Lemma \ref{lemma-cover} to Poset  $\p{P_S, \subseteq}$, $C_S$ and $B$ (restricted to $S$), it follows that all the maximal elements (with respect to inclusion) of $\cb{U \in C_S \cup \tilde {C_S}| U \subseteq B}$ are in $\tilde{C_S}$, and $T_B$ is one of them, otherwise some $T\in \tilde{C_S}$, $T_B \subset T$ is a maximal element, contradicting that $T_B$ has maximal size. This means that $H_{T_B, S}=\emptyset$. 
\end{proof}
\noindent
Lemma \ref{lemma-TB} implies that Program $Q$  has to just check for any $T \in \tilde {C}$, $T \subseteq B$, that for every ${\cal C}_S \in I_T$, there is a $U \in C_S$ that $T \subset U \subseteq B$.
But $Q$ does not have enough \textit{time} to go over all constraints in $I_T$, so the idea is to enumerate all $T \subset U \subseteq B$ and use some \textit{machinery} that (implicitly) adds $1$ for each constraint that passes the check and $0$ for each constraint that fails, to get a number that equals $\size{I_T}$ iff the check is passed for $T$. 
This is not simple, because the same U can be a superset of T in many constraints, 
and T can have many supersets in one constraint. In an earlier work \cite{Maj17}, we applied the inclusion-exclusion principle as the machinery, but with a limited success. In the current paper, we will use the Möbius inversion formula. In fact the Inclusion-Exclusion principal can be proved by Möbius inversion formula \cite{Rot64}, thus can be seen as a special case of it.

We define the function $f_{T, S}:H_{T, S} \rightarrow \mathbb{Z}$ recursively, subject to 
\eq[function-g]{
 \sum_{W \subseteq U} f_{T,S}\p{W} = 1,
}
for all $U \in H_{T, S}$. Applying Theorem \ref{theorem-mobius-inversion-formula} to Poset $\p{H_{T,S}, \subseteq}$ and $f_{T,S}$, it follows that $f_{T, S}$ is evaluated by Möbius inversion formula \eqref{mobius-inversion-formula-f}.
 
 \begin{llemma}
Let $B$ be an assignment. If $B$ satisfies $I$ then for each $T \in \tilde C$, 
\eqw{
\sum_{{\cal C}_S \in I_T} \sum_{\substack{W \in H_{T, S}\\W 
\subseteq B}} f_{T,S}(W) = \size{I_T}.
}
If $B$ does not satisfy $I$, then
\eq{
\sum_{{\cal C}_S \in I_{T_B}} \sum_{\substack{W \in H_{T_B, S}\\W\subseteq B}} f_{T_B,S}(W) < \size{I_{T_B}}.
    }
\end{llemma}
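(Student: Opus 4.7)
The plan is to treat the two parts separately, with the defining identity \eqref{function-g} of $f_{T,S}$, namely $\sum_{W \in H_{T,S},\, W \subseteq U} f_{T,S}\p{W} = 1$ for every $U \in H_{T,S}$, as the common key ingredient.

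For the first part, I would fix any $T \in \tilde C$ and show that for every ${\cal C}_S \in I_T$ the inner sum equals $1$, whence the outer sum is $\size{I_T}$. Write $B_S$ for the restriction of $B$ to $S$. Since $T \in \tilde{C_S} \subseteq P_S$, all variables occurring in $T$ lie in $S$; if $T \not\subseteq B$ the inner sum is trivially $0$, so the nontrivial case is $T \subseteq B$, which forces $T \subseteq B_S$. Because $B$ satisfies $I$ we have $B_S \in C_S$; together with $T \notin C_S$ (from $T \in \tilde{C_S}$) this yields $T \ne B_S$ and hence $T \subset B_S$. So $B_S$ itself lies in $H_{T,S}$, and applying \eqref{function-g} at $U = B_S$ — using that $W \subseteq B$ and $W \subseteq B_S$ coincide for $W \in H_{T,S} \subseteq P_S$ — gives the inner sum the value $1$.

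For the second part, Lemma~\ref{lemma-TB} supplies $T_B \in \tilde C$ such that every ${\cal C}_S \in I_{T_B}$ not satisfied by $B$ has $H_{T_B,S} = \emptyset$, and at least one such constraint exists. I would split $I_{T_B}$ into the constraints satisfied by $B$ and those not: for a satisfied ${\cal C}_S$, the argument of the first part (using $T_B \subseteq B$, immediate from the choice of $T_B$ in the proof of Lemma~\ref{lemma-TB}) gives inner sum $1$, and for an unsatisfied ${\cal C}_S$ the inner sum is an empty sum, equal to $0$. Summing, the total is the number of constraints in $I_{T_B}$ satisfied by $B$, strictly less than $\size{I_{T_B}}$ since at least one is unsatisfied.

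The main subtlety is the inference $T \subseteq B \Rightarrow T \subset B_S$ in the first part: it hinges on $T \notin C_S$, the defining feature of $T \in \tilde{C_S}$, which rules out the degenerate coincidence $T = B_S$. Without this, $B_S$ might fail to lie in $H_{T,S}$ and the identity \eqref{function-g} could not be applied directly.
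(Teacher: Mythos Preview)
Your proposal is correct and follows the same line as the paper: for each constraint in $I_T$ that $B$ satisfies, the restriction $B_S$ is the maximum of $\{W\in H_{T,S}:W\subseteq B\}$, so \eqref{function-g} makes the inner sum $1$; for each unsatisfied constraint in $I_{T_B}$, Lemma~\ref{lemma-TB} gives $H_{T_B,S}=\emptyset$ and the inner sum is $0$. Your explicit use of $B_S$ and the check $T\ne B_S$ (from $T\notin C_S$) is a bit more careful than the paper, which writes $B$ where $B_S$ is meant and, like your argument, tacitly restricts the first claim to the case $T\subseteq B$ (the only case the algorithm ever uses).
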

\begin{proof}
If $B$ satisfies a constraint ${\cal C}_S$, then  for all $T \in \tilde {C_S}$, $T \subset B$, it is clear that $B$ is the maximum element of $\cb{W \in H_{T,S}|W 
\subseteq B}$, with respect to inclusion. Thus, by \eqref{function-g}, the inner summation equals $1$.

Now, if $B$ satisfies $I$, then clearly $B$ satisfies all ${\cal C}_S \in I_T$ for each $T\in \tilde C$, and the outer summation equals $\size{I_T}$.
And if $B$ does not satisfy $I$, then  by Lemma \ref{lemma-TB}, for any constraint ${\cal C}_S\in I_{T_B}$ that is not satisfied by $B$, the inner summation is empty and equals $0$. Thus the outer summation is $< \size{I_{T_B}}$.
\end{proof}
\noindent
Now, apply the Fubini’s Principle to swap the summations:
\eqw{
\sum_{\substack{W \in \bigcup_{{\cal C}_
S\in I_T} H_{T, S}\\W \subseteq B}}\quad \sum_{{\cal C}_S \in I_T}  f_{T,S}(W).
}
Program $Q$ evaluates the inner summation  in its deterministic part  and the outer summation in its nondeterministic part, for all $T \in \tilde C$.

Now we are ready for our main theorem:
\begin{theorem}\label{theorem-u-csp-in-w}
\pscsp\  $\in \W[1]$.
\end{theorem}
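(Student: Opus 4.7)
The plan is to invoke Theorem \ref{nram} and build a tail-nondeterministic $\kappa$-restricted NRAM program $Q$ that decides \pscsp. All the combinatorial pieces have been assembled in the preceding text; what remains is to schedule them into a deterministic preprocessing phase followed by a short nondeterministic tail, and to verify the resource bounds.

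\textbf{Deterministic preprocessing.} First, $Q$ builds the equivalent instance $I$ described above, so that each constraint is characterized by its variable set. Next, it enumerates $\tilde C$ by running through each constraint ${\cal C}_S$, each $C \in C_S$ (of size $\le k$), and each cover of $C$ in $P_S$ (of size $\le k{+}1$). Since satisfying assignments have size $\le k$, the candidate list has size at most $(n\size{D})^{k+1}$, which is polynomial in $n$. For every $T \in \tilde C$, the program records $\size{I_T}$ and, for every $W$ appearing in some $H_{T,S}$ with $T\subset W$, computes and stores the inner sum
$$\mathrm{inner}(T, W) \defeq \sum_{{\cal C}_S \in I_T} f_{T,S}(W),$$
using the Möbius inversion formula \eqref{mobius-inversion-formula-f} in each poset $(H_{T,S},\subseteq)$. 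Because every interval in that poset has size $\le 2^k$ and $g\equiv 1$ there, Theorem \ref{theorem-mobius-inversion-formula} bounds $\size{f_{T,S}(W)}$ by a function of $k$ alone, so $\size{\mathrm{inner}(T,W)}$ and $\size{I_T}$ both fit inside $f(k)\cdot p(n)$.

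\textbf{Tail-nondeterministic phase.} At the very end of the run, $Q$ makes $k$ consecutive GUESS calls to produce a candidate assignment $B$ of size $k$, guessing variable--value pairs from $V\times(D\setminus\cb{\bm 0})$ and rejecting if variables repeat. It then iterates over the at most $2^k$ subsets $T\subseteq B$ that lie in $\tilde C$, and for each such $T$ over the $W$ with $T\subset W\subseteq B$, giving at most $3^k$ pairs $(T,W)$ in total. For each pair the stored value $\mathrm{inner}(T,W)$ is looked up and accumulated per $T$; $Q$ accepts iff every accumulated sum equals $\size{I_T}$. By the displayed lemma and the Fubini swap derived above, this is precisely the condition that $B$ satisfies $I$. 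All activity after the guesses costs $f(k)$ steps, so every nondeterministic step falls in the last $q(k) \defeq f(k)+k$ steps.

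\textbf{Bookkeeping and main obstacle.} A routine count confirms that the register count, the step count, and the magnitudes of stored values all sit within $f(k)\cdot p(n)$, giving $\kappa$-restriction; combined with the tail property, Theorem \ref{nram} yields $\pscsp\in\W[1]$. The step I expect to be the main obstacle is squeezing the post-guess verification into the tail window: the naive per-constraint check would cost $\mathrm{poly}(n)$ and overflow. This is exactly what the precomputed table $\mathrm{inner}(T,W)$—made possible by the Möbius inversion—buys us: after the $k$ guesses only a traversal over the $\le 3^k$ pairs with $W\subseteq B$ is needed, and everything finishes within $q(k)$ steps.
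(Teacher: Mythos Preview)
Your overall architecture matches the paper's: build $I$, precompute $\size{I_T}$ and the inner sums $\sum_{{\cal C}_S\in I_T} f_{T,S}(W)$, then in the tail guess $B$ and verify \eqref{equation-L-T_equals_D_T} by iterating over the $O(3^k)$ pairs $T\subset W\subseteq B$. That is exactly the intended proof.

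There is, however, a genuine gap in your resource analysis. You bound the ``candidate list'' for $\tilde C$ by $(n\size{D})^{k+1}$ and call this ``polynomial in $n$''. It is not: $(n\size{D})^{k+1}=n^{\Theta(k)}$, which is \emph{not} of the form $f(k)\cdot p(n)$ with a fixed polynomial $p$. Taken at face value, your preprocessing would use $n^{\Theta(k)}$ steps and registers, and the program would fail to be $\kappa$-restricted; Theorem~\ref{nram} then gives nothing. This is precisely the place where the hypothesis ``uniform'' earns its keep: because each relation is given by an explicit list of tuples, $\size{C_S}\le n$, and hence $\size{\tilde{C_S}}\le \size{S}\,\size{D}\,(\size{C_S}+1)\le n^3$. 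Summing over the (polynomially many) constraints shows that both tables have genuinely polynomially many keys, independent of $k$. Your enumeration procedure (constraint $\to$ tuple $\to$ cover) in fact realises this polynomial bound; you just stated the wrong estimate for it.

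A second, smaller gap: you assert that ``all activity after the guesses costs $f(k)$ steps'' but never say how $\mathrm{inner}(T,W)$ and $\size{I_T}$ are retrieved in $O(k)$ time. A flat table indexed by the pair $(T,W)$ would need $n^{\Theta(k)}$ addresses. The paper stores both tables as \emph{tries} keyed by the sequence of variable--value pairs in $T$ (respectively $T,W$), so that each lookup walks a path of length $O(k)$. Without naming such a data structure, the tail bound is unjustified.
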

\begin{proof}
We give a tail-nondeterministic $\kappa$-restricted NRAM program $Q$ deciding the problem. 
The result follows by Theorem \ref{nram}. Let $I_0$ be the given instance with the set of variables $V$ and the parameter value $k$.
Program $Q$ first constructs Instance $I$ from Instance $I_0$ as described above. This can easily be done in polynomial time.

Next, $Q$ calculates two tries:
\vspace{5pt}
\begin{itemize}
\item
Trie 1 stores the values $d\br{T}\defeq \size{I_T}$ for all $T \in \tilde C$.
\item
Trie 2 stores the values 
\eq[definition-function-l]{
        l\br{T, W} \defeq \sum_{{\cal C}_S \in I_T}  f_{T,S}(W),
                }
for every $T \in \tilde C$, and every  $W \in \bigcup_{{\cal C}_
S\in I_T} H_{T, S}$ (if $W\not \in H_{T,S}$, then we extend $f_{T, S}$ to $f_{T, S}\p{W}\defeq 0$).
\end{itemize}
\noindent
For the queries with nonexistent keys, $0$ is returned. 

Now  the nondeterministic part of the computation starts: 
Program $Q$ guesses an assignment $B$ of size $k$.
Then $Q$ iterates over subsets $T \subseteq B$, and if $d\br{T} > 0$, checks if
\eq[equation-L-T_equals_D_T]{
        \sum_{\substack{T\subset W \subseteq B }} l\br{T, W}=d\br{T}.
}
By our argument above, it should be clear that $Q$  decides \pscsp .

We claim that each Trie has at most polynomially many entries. This is because the problem is {uniform} (of size, say, $n$), and the input contains the list of tuples of each relation in $I_0$. Thus, we have
	 $\size{C_S} \le n$ and 
	 $\size{\tilde{C_S}} \le \size{S}\size{D} \p{\size{C_S} + 1} \le n^3$. 

We also claim that $f_{T, S}$ is an fpt-function. This is because by construction of $I$, for all constraints ${\cal C} \in I$, for all $W \in C$, we have $\size{W}\le k$. This implies that  the size of an interval in $H_{T, S}$ is at most $2^k$, and the claim follows by \eqref{function-g} and Theorem \ref{theorem-mobius-inversion-formula}.

The above claims imply that the tries can easily be populated in fpt-time.
 
 Lastly, we show that Program Q is  tail-nondeterministic $\kappa$-restricted. This is because the tries are arranged such that for all assignments $T, U$ of size $\le k$, the query with key $T$ or $T, U$ is answered in
$O\p{k}$ time (a general property of the trie data structure). Moreover, the summation in Check (\ref{equation-L-T_equals_D_T}) has  at most $2^{2k}$ summands. 
This completes the proof.
 \end{proof}

\section{\texorpdfstring{\pwcsp}{p-WCSP} and \texorpdfstring{\pss}{p-SUBSET-SUM} are in \texorpdfstring{\W[1]}{W[1]}} 
In this section we prove that \pwcsp\ and \pss\ are in \W[1]. 
\begin{theorem}\label{theorem-subsetsum}
        $\pssf[f] \in \W[1]$, for all computable functions $f$.
\end{theorem}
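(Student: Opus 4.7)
The plan is to build a tail-nondeterministic $\kappa$-restricted NRAM program $P$ deciding $\pssf[f]$ and appeal to Theorem \ref{nram}. Because the inputs $x_i$ and $t$ can be as large as $n^{f\p{k}}$, they do not fit in a single register, whose contents must be bounded by $f\p{k}\cdot p\p{n}$; moreover, the verification after the last guess must use only a number of steps depending on $k$ alone, which already rules out any $O\p{\log n}$ subroutine. The key idea is to represent each $x_i$ and the target $t$ by their base-$n$ expansions: every number becomes a vector of $f\p{k}$ digits, each in $\br{0, n-1}$ and fitting in one register.

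In its deterministic phase, $P$ reads the input and stores the digit arrays $a^{\p{i}}_\ell$ and $t_\ell$ for $i \in \br{n}$ and $\ell \in \br{0, f\p{k}-1}$, together with a division-by-$n$ table giving $\lfloor v/n\rfloor$ and $v \bmod n$ for each $v \in \br{0, kn}$. The table is built incrementally from $v=0$ upwards using only addition and comparison, so division by $n$ is never actually performed. Everything in this phase runs in polynomial time and uses polynomially many registers of polynomial size.

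In the nondeterministic phase $P$ produces $k$ distinct indices $i_1 < \cdots < i_k$ in $\br{n}$ via $k$ consecutive GUESSes of the successive gaps $i_j - i_{j-1}$, with the accumulator set appropriately so the indices come out ordered and in range. Immediately after the last guess, for each $\ell$ the program computes the column sum $c_\ell \defeq \sum_{j=1}^k a^{\p{i_j}}_\ell$, carry-propagates via $d_\ell \defeq \p{c_\ell + \mathrm{carry}} \bmod n$ and $\mathrm{carry} \defeq \lfloor \p{c_\ell + \mathrm{carry}}/n\rfloor$ (both looked up in the table in constant time), and checks $d_\ell = t_\ell$; acceptance also requires the final carry to vanish. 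Column sums and carries never exceed $kn$, so they fit in one register.

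The verification uses $O\p{k\cdot f\p{k}}$ register operations, so the $k$ guesses together with the setup and verification all lie within the last $q\p{k} = O\p{k\cdot f\p{k}}$ steps of the run, giving the tail property. Correctness is immediate because agreement of all base-$n$ digits together with a vanishing final carry is equivalent to the integer identity $\sum_{j=1}^k x_{i_j} = t$. The main obstacle I anticipate is exactly the tension flagged at the outset: every intermediate value must stay inside a single register while the tail length is bounded by a function of $k$ alone. Once I commit to the base-$n$ representation and the $O\p{kn}$-size division-by-$n$ table, the remaining program design and the verification of the NRAM constraints are routine bookkeeping.
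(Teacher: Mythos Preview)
Your proposal is correct and takes essentially the same approach as the paper: represent the numbers in base $n$, precompute the digit tables deterministically, then guess the $k$ indices and verify the sum digit by digit with carry propagation in $O\p{k\cdot f\p{k}}$ steps. The only cosmetic differences are that the paper handles division by $n$ via $O\p{k}$ repeated subtractions (since the quotient is at most $k$) rather than a precomputed lookup table, and guesses the indices directly and checks distinctness rather than guessing gaps; also note a harmless off-by-one---numbers up to $n^{f\p{k}}$ require $f\p{k}+1$ base-$n$ digits, not $f\p{k}$.
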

\begin{proof}
        For  an integer $a \in \br{0,  n^{f\p{k}}}$, define numbers $\bar a^j \in \br{0, n-1}$ for $j \in \br{0, f\p{k}}$ as
        $a = \sum_{j \in \br{0, f\p{k}}} \bar a^j \, n^{j} $ (this is presentation of $a$ in base $n$, thus $\bar a^j$ are unique).

        For a fixed $f$, we present a tail-nondeterministic $\kappa$-restricted program $P$
 deciding \pssf[f]. On input $x_1, \ldots, x_n, t$, Program $P$ first computes two tables. 
Table $1$  stores the values $\bar x_i^j$ for $i \in \br{n}$ and $j \in \br{0, f\p{k}}$, 
and Table $2$ stores the values $\bar t^j$ for $j \in \br{0, f\p{k}}$. 
The tables are arranged in such a way that the numbers can be accessed in constant time. 
The tables can be easily computed in polynomial time.

        Now the nondeterministic part of the computation starts: Program $P$ guesses $k$ elements in
        $\br{n}$ and checks if they are distinct. Let $B$ be the set of guessed elements.
	Then, for $j \in \br{0, f\p{k}}$, Program $P$ divides $c_{j-1} + \sum_{i \in B } \bar x_i^j $ by $n$, sets $c_j$ as the quotient ($c_{-1} \defeq 0$) and checks if
	the remainder equals $\bar t^j$. Finally, $P$ checks if  $c_{f\p{k}}=0$. 

        Notice that $c_j \le k+1$ for $j \in \br{0, f\p{k}}$, and $P$ can perform each division operations with $O\p{k}$ of its arithmetic operations.        
Thus, the number of steps in the nondeterministic part is $O\p{k f\p{k}}$, and Program $P$ is $\kappa$-restricted tail-nondeterministic.
\end{proof}
\begin{theorem}
	$\pwcspf[f] \in \W[1]$, for all computable functions $f:\N \rightarrow \N$. 
\end{theorem}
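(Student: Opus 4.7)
The plan is to run the tail-nondeterministic programs from the proofs of Theorems \ref{theorem-u-csp-in-w} and \ref{theorem-subsetsum} in parallel on a single guessed set. Given an instance with variable set $V$, weight function $w$, target $t$, and parameter $k$, the deterministic preprocessing phase will build the transformed instance $I$ and the tries storing $d\br{T}$ and $l\br{T, W}$ exactly as in the proof of Theorem \ref{theorem-u-csp-in-w}. In parallel, following the proof of Theorem \ref{theorem-subsetsum}, it will compute the base-$n$ digits of each weight $w\p{v}$ for $v \in V$ and of the target $t$, storing them in a table indexed by $v$ and the position $j \in \br{0, f\p{k}}$. All of this takes polynomial time and stays inside the $\kappa$-restricted register bounds, since the tries have polynomially many entries (argued in the proof of Theorem \ref{theorem-u-csp-in-w}) and each digit lies in $\br{0, n-1}$.

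In the tail-nondeterministic phase, the program will guess a single $k$-subset $B \subseteq V$, interpreted as the set of $1$-variables of the candidate assignment, and then perform two deterministic checks on the same $B$. For CSP-satisfaction it iterates over all $T \subseteq B$ with $d\br{T} > 0$ and verifies equation (\ref{equation-L-T_equals_D_T}); this costs $O\p{2^{2k}}$ post-guess steps. For the weight condition it runs the carry propagation from Theorem \ref{theorem-subsetsum}: for $j = 0, 1, \ldots, f\p{k}$ it digit-wise adds the $j$-th digits of the $w\p{v}$ over $v \in B$ together with the incoming carry, compares the remainder modulo $n$ with the $j$-th digit of $t$, and passes the quotient as the next carry $c_j$; finally it checks $c_{f\p{k}} = 0$. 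This costs $O\p{k \, f\p{k}}$ post-guess steps, with each carry bounded by $k+1$ and register values in $O\p{kn}$. The program accepts iff both checks succeed.

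Correctness follows directly from combining the two earlier analyses: on input $B$ the first check is equivalent to $B$ being a satisfying assignment of $I_0$, and the second is equivalent to $\sum_{v \in B} w\p{v} = t$. All nondeterminism sits in the initial guess of $B$, after which both checks run deterministically with total step count bounded by a function of $k$, so the combined program is tail-nondeterministic $\kappa$-restricted and the theorem will follow from Theorem \ref{nram}. I do not anticipate any real obstacle beyond bookkeeping, since the two sub-procedures operate on disjoint auxiliary data (satisfaction tries versus weight digit tables) but share the same guessed $B$; the only point worth checking is that the size-$k$ satisfying assignments of $I_0$ correspond bijectively to the $k$-subsets $B \subseteq V$ of $1$-variables, which is immediate from the definition of size on the Boolean domain with free element $\bm 0$.
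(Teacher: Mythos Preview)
Your proposal is correct and takes essentially the same approach as the paper: run the deterministic preprocessing of both Theorems \ref{theorem-u-csp-in-w} and \ref{theorem-subsetsum}, guess a single size-$k$ set $B$, then perform both post-guess checks on that $B$. The paper sequences the two checks (subset-sum first, then CSP) rather than phrasing them as ``in parallel,'' but this is an immaterial presentational difference.
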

\begin{proof}
We present a tail-nondeterministic $\kappa$-restricted program $H$ deciding the problem. 
Let $Q$ be the program that decides \pscsp , and $P$ be the program that decides \pssf[f], as described in the proofs of Theorems \ref{theorem-u-csp-in-w} and \ref{theorem-subsetsum}, respectively. 
$H$ first performs the  deterministic part of $Q$ and then that of $P$. 
Then, $H$  performs the nondeterministic part of $P$: 
it guesses an assignment $B$ of size $\size{B}\defeq k$ 
and checks if weights of variables in $B$ add up to $t$. If no, then
this nondeterministic branch rejects. If yes, then $H$ performs the nondeterministic part of $Q$, omitting the guessing step, 
to check if $B$ is a satisfying assignment. If yes, then $H$ accepts. 
The number of steps in the nondeterministic part of $H$ is bounded by the sum of that of $Q$ and $P$. 
Thus, Program $H$ is $\kappa$-restricted and tail-nondeterministic. 
\end{proof}
\section*{Acknowledgment}
The author thanks Johannes Köbler, Frank Fuhlbrück, and Amir Abboud
for helpful discussions.
\bibliography{kuk-utf8}

\end{document}